\newtheorem{theorem}{Theorem}
\newtheorem{corollary}{Corollary} 
\begin{document} 
\title{EPR states and Bell correlated states \\
in algebraic quantum field theory} 
\author{Yuichiro Kitajima} 
\date{}
\maketitle

\begin{abstract}
A mathematical rigorous definition of EPR states has been introduced by Arens and Varadarajan for finite dimensional systems, and extended by Werner to general systems. In the present paper we follow a definition of EPR states due to Werner. Then we show that an EPR state for incommensurable pairs is Bell correlated, and that the set of EPR states for incommensurable pairs is norm dense between two strictly space-like separated regions. 
\end{abstract}



\section{Introduction}
Einstein, Podolsky and Rosen (EPR) \cite{EPR} discussed a system consisting of
two particles. They have interacted initially and then moved out so that the positions,
and the momenta, of the two particles are strictly correlated,
respectively.
It follows that if one were to measure the position of the first particle, one could predict with certainty the outcome of a position measurement on the second particle; and it is also the case for a momentum measurement.
EPR proposed a criterion of reality: ``If, without in any way disturbing a system, we can predict with certainty (i.e., with probability equal to unity) the value of a physical quantity, then there exists an element of physical reality corresponding to this physical quantity''
\cite[p.777]{EPR}.

In accordance with this criterion, the position and the momentum of the second particle have simultaneous reality since the measurement on the first particle has not disturbed the second particle. On the other hand, position and momentum cannot have simultaneous reality in any states in quantum mechanics. Therefore EPR regarded quantum-mechanical description as incomplete, and concluded that the quantum mechanical description of a physical system should be supplemented by postulating the existence of ``hidden variables,'' the specification of which would predetermine the result of measuring any observable of the system. Later Bohm simplified EPR original state. It is a unit vector in the Hilbert space $\mathbb{C}^2 \otimes \mathbb{C}^2$, represented as
\[ \frac{1}{\sqrt{2}} \left( 
\begin{pmatrix} 1 \\ 0 \end{pmatrix} \otimes \begin{pmatrix} 0 \\ 1 \end{pmatrix} - 
\begin{pmatrix} 0 \\ 1 \end{pmatrix} \otimes \begin{pmatrix} 1 \\ 0 \end{pmatrix} \right). \]

Bell assumed a complete description in terms of hidden variables and the locality assumptions tacitly assumed in EPR argument, and derived Bell's inequality under these assumptions. Bohm's simplified EPR state, however, violates Bell's inequality.
In the mid 1980s, Summers and Werner obtained a series of deep mathematical results on Bell's inequality in algebraic quantum field theory. For example, in most standard quantum field models, all normal states maximally violate Bell's inequalities across spacelike separated tangent wedges (cf. \cite{Redei1998} and \cite{Summers1990}).

The relation between EPR states and Bell's inequalities has been also investigated (cf. \cite{Halvorson2000}, \cite{Hofer-SzaboVecsernyes2012a}, \cite{Hofer-SzaboVecsernyes2012b}, \cite{Huang2008}). For example, Halvorson \cite{Halvorson2000} defined EPR states as a state of canonical commutation relations algebra, and showed that it maximally violates Bell's inequality. Another formulation of EPR states was introduced by Arens and Varadarajan \cite{ArensVaradarajan2000}, and extended by Werner \cite{Werner1999} to general systems.

In the present paper we follow a definition of EPR states due to Werner, and try to clarify the relations between EPR states and Bell correlated states in algebraic quantum theory. We show that an EPR state for incommensurable pairs is Bell correlated (Theorem \ref{violates}), and that the set of EPR states for incommensurable pairs is norm dense between two strictly space-like separated regions in algebraic quantum field theory (Theorem \ref{exists}).

\section{The relations between EPR states and Bell correlated states}
In this paper, we use the following notation. 
If $\mathbb{A}$ is a set of operators acting on a Hilbert space $\mathcal{H}$, let $\mathbb{A}'$ represent its commutant, the set of all bounded operators on $\mathcal{H}$ which commute with all elements of $\mathbb{A}$. $[X, Y]$ denotes the commutator of $X$ and $Y$, i.e., $[X,Y]=XY-YX$. 
We call $X$ a self-adjoint contraction on a Hilbert space $\mathcal{H}$ if $X$ is a self-adjoint operator on $\mathcal{H}$ such that $-I \leq X \leq I$, where $I$ is an identity operator on $\mathcal{H}$. $\mathfrak{N}_1$ and $\mathfrak{N}_2$ denote von Neumann algebras on a Hilbert space $\mathcal{H}$ such that $\mathfrak{N}_1 \subseteq \mathfrak{N}_2'$, and $\mathfrak{N}_1 \vee \mathfrak{N}_2$ denotes a von Neumann algebra generated by $\mathfrak{N}_1$ and $\mathfrak{N}_2$. 

In this section, we provide an abstract definition of EPR states. Let $\mathfrak{N}$ be a von Neumann algebra and let $A_1$ and $A_2$ be commuting self-adjoint operators in $\mathfrak{N}$.
A normal state $\omega$ of $\mathfrak{N}$ is called {\em an EPR state}\
\cite{Werner1999}
for $(A_1,A_2)$ if 
\[ \omega((A_1-A_2)^2)=0. \]

For any commuting self-adjoint operators $A_1$ and $A_2$, the joint probability distribution $\mu^{A_1,A_2}_{\omega}$ of $A_1,A_2$ in $\varphi$ is defined uniquely to be a probability measure $\mu$ on $\mathbb{R}^2$ such that
$$
\omega(f(A_1,A_2))=\int_{\mathbb{R}^{2}}f(x,y)\,d\mu^{A_1,A_2}_{\omega}(x,y)
$$ 
for any polynomial $f(x,y)$.
Then, $\omega$ is an EPR state if and only if $\mu^{A_1,A_2}_{\omega}$ is concentrated in the diagonal, i.e, $\mu^{A_1,A_2}_{\omega}(\{ (x,x) | x \in \mathbb{R} \})=1$. Thus, simultaneous measurements of $A_1$ and $A_2$ always give concordant results, and each one of the outcomes would predict with certainty the other (cf. \cite{Ozawa2006}). In other words, $A_1$ and $A_2$ are strictly correlated in $\omega$ if $\omega$ is an EPR state for $A_1$ and $A_2$.


Nocommutativity of operators as well as strict correlation plays an important role in EPR's argument. EPR asserted that ``either (1) the quantum-mechanical description of reality given by the wave function is not complete or (2) when the operators corresponding to two physical quantities do not commute the two quantities cannot have simultaneous reality" \cite[p.778]{EPR}. 
In this paper, we interpret the second part of this assertion in terms of a beable algebra. Let $\mathfrak{B}$ be a C*-algebra and let $\omega$ be a state of $\mathfrak{B}$. $\mathfrak{B}$ is called a beable algebra for a given state $\omega$ if there is a probability measure $\mu$ on the space $\mathcal{S}_{DF}(\mathfrak{B})$ of dispersion-free states of $\mathfrak{B}$ satisfying
\[ \omega(A)=\int_{\mathcal{S}_{DF}(\mathfrak{B})} \omega(A)d\mu(A) \]
for every $A \in \mathfrak{B}$ \cite[p.2447]{HalvorsonClifton1999}. A dispersion-free state $\omega$ of $\mathfrak{B}$ satisfies the condition that $\omega(X^2)=\omega(X)^2$ for any self-adjoint element $X \in \mathfrak{B}$. Roughly speaking, a beable algebra for $\omega$ is the set of observable which can be taken to have determinate values statistically distributed in accordance with $\omega$. $\mathfrak{B}$ is a beable algebra  for $\omega$ if and only if $\omega(|[A,B]|^2)=0$ for any $A,B \in \mathfrak{B}$ \cite[Proposition 2.2]{HalvorsonClifton1999}. Thus $\omega(|[A,B]|^2) \neq 0$ means that $A$ and $B$ cannot have simultaneous reality in $\omega$. 

Let consider two operators $A$ and $B$ and a unit vector $\Psi_0$ such that
\[ A=\begin{pmatrix} 1 & 0 & 0 \\ 0 & 0 & 0 \\ 0 & 0 & 1 \end{pmatrix}, B= \frac{1}{2}\begin{pmatrix} 1 & 1 & 0 \\ 1 & 1 & 0 \\ 0 & 0 & 2 \end{pmatrix}, \Psi_0=\begin{pmatrix}  0 \\ 0 \\ 1 \end{pmatrix} . \]
Then $[A,B] \neq 0$. On the other hand, from a beable algebraic point of view, $A$ and $B$ can have simultaneous reality in $\Psi_0$ because $A \Psi_0 = B \Psi_0 = \Psi_0$. Therefore $[A,B] \neq 0$ does not always mean that $A$ and $B$ cannot have simultaneous reality.

On the basis of a bealble algebra, we say a normal state $\omega$ of $\mathfrak{N}_1 \vee \mathfrak{N}_2$ is {\em an EPR state for incommensurable pairs} if there exist projections $E_1, F_1 \in \mathfrak{N}_1$ and $E_2, F_2 \in \mathfrak{N}_2$ such that $\omega$ is an EPR state for $(E_1, E_2)$ and $(F_1, F_2)$, $\omega(|[E_1,F_1]|^2) \neq 0$ and $\omega(|[E_2,F_2]|^2) \neq 0$ (cf \cite{OzawaKitajima2012}). In this state, if we were to measure $E_2$, we could predict with certainty the outcome of $E_1$; and if we were to measure $F_2$, we could predict with certainty the outcome of $F_1$. In accordance with EPR's criterion of reality, $E_1$ and $F_1$ have simultaneous reality. On the other hand $E_1$ and $F_1$ cannot have simultaneous reality since $E_1$ does not commute with $F_1$ in $\omega$. Therefore an EPR state for incommensurable pairs allows for similar arguments to EPR's \cite{EPR}.

If there exist self-adjoint contractions $A_1,B_1 \in \mathfrak{N}_1$ and $A_2,B_2 \in \mathfrak{N}_2$ such that 
\begin{equation}
\label{bell_i}
 \frac{1}{2} \vert \omega(A_1A_2+A_1B_2+B_1A_2-B_1B_2) \vert >1, 
\end{equation}
we say that {\em $\omega$ violates a Bell inequality, or is Bell correlated}.
It is known that the left-hand side of (\ref{bell_i}) cannot exceed $\sqrt{2}$ (cf. \cite{Cirel'son1980} and \cite{SummersWerner1987a}). We say that {\em $\omega$ is strongly maximally correlated} if there exist self-adjoint contractions $A_1,B_1 \in \mathfrak{N}_1$ and $A_2,B_2 \in \mathfrak{N}_2$ such that 
\[ \frac{1}{2} \vert \omega(A_1A_2+A_1B_2+B_1A_2-A_2B_2) \vert =\sqrt{2}. \]

Landau \cite{Landau1987} provided a sufficient condition for the strongly maximal Bell correlation in some normal state, and Bohata and Hamhalter \cite{BohataHamhalter2009} \cite{BohataHamhalter2010} characterized strongly maximally Bell correlated states.



As it is shown in Theorem \ref{maximal} and Corollary \ref{maximal_cor}, a strongly maximally Bell correlated normal state is an EPR state for incommensurable pairs. The following theorem can be proven in a similar way of the proof of \cite[Theorem 2.1 (2b)]{SummersWerner1987a}.

\begin{theorem}
\label{maximal}
Let $\mathfrak{N}_1$ and $\mathfrak{N}_2$ be von Neumann algebras on a Hilbert space $\mathcal{H}$ such that $\mathfrak{N}_1 \subseteq \mathfrak{N}_2'$, let $\omega$ be a normal state of $\mathfrak{N}_1 \vee \mathfrak{N}_2$ and let $S_1$ and $S_2$ be the support of $\omega |_{\mathfrak{N}_1}$ and $\omega |_{\mathfrak{N}_2}$ respectively, where $\omega |_{\mathfrak{N}_i}$ is the restriction of $\omega$ to $\mathfrak{N}_i$. Then the following conditions are equivalent.
\begin{enumerate}
\item $\omega$ is strongly maximally Bell correlated.
\item There are projections $E_1,F_1 \in S_1\mathfrak{N}_1S_1$ and $E_2,F_2 \in S_2\mathfrak{N}_2S_2$ such that
\[ \omega((E_1-E_2)^2)=\omega((F_1-F_2)^2)=0, \]
\[ (2E_1-S_1)(2F_1-S_1)+(2F_1-S_1)(2E_1-S_1)=0. \]
\end{enumerate}
\end{theorem}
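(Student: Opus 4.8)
The proof splits into the two implications $1\Rightarrow2$ and $2\Rightarrow1$, and throughout I would fix notation by writing $\mathcal{R}=\tfrac12(A_1A_2+A_1B_2+B_1A_2-B_1B_2)$ for the self-adjoint operator whose expectation appears in the Bell inequality \eqref{bell_i}, and by letting $(\pi,\mathcal{K},\Omega)$ be the GNS triple of $\omega$. Two standard facts about the supports will be used repeatedly: $\pi(S_i)\Omega=\Omega$ (since $\omega(S_i)=1$), and $\omega$ is faithful on the compressed algebra $S_i\mathfrak{N}_iS_i$, so that for self-adjoint $W\in S_i\mathfrak{N}_iS_i$ the relation $\pi(W)\Omega=0$ already forces $W=0$ (because $\omega(W^2)=\|\pi(W)\Omega\|^2=0$).

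For $2\Rightarrow1$ the plan is direct. Put $U_i=2E_i-S_i$ and $V_i=2F_i-S_i$; since $E_i,F_i\le S_i$ are projections these are self-adjoint contractions with $U_i^2=V_i^2=S_i$, and $U_1,V_1$ anticommute by hypothesis. The EPR conditions give $\pi(U_1)\Omega=\pi(U_2)\Omega$ and $\pi(V_1)\Omega=\pi(V_2)\Omega$. First I would \emph{transfer} the anticommutation to the second algebra: using that $\mathfrak{N}_1$ and $\mathfrak{N}_2$ commute together with these vector identities, a one–line computation gives $\pi\bigl(U_2V_2+V_2U_2\bigr)\Omega=0$, and faithfulness on $S_2\mathfrak{N}_2S_2$ forces $U_2V_2+V_2U_2=0$. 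I would then take $A_1=U_1$, $B_1=V_1$, and the rotated contractions $A_2=\tfrac{1}{\sqrt2}(U_2+V_2)$, $B_2=\tfrac{1}{\sqrt2}(U_2-V_2)$; the just–established anticommutation makes $A_2,B_2$ self-adjoint contractions, and since $A_2+B_2=\sqrt2\,U_2$, $A_2-B_2=\sqrt2\,V_2$, together with $\omega(U_1U_2)=\omega(U_1^2)=1=\omega(V_1V_2)$, one obtains $\tfrac12\omega\bigl(A_1(A_2+B_2)+B_1(A_2-B_2)\bigr)=\tfrac{\sqrt2}{2}\bigl(\omega(U_1U_2)+\omega(V_1V_2)\bigr)=\sqrt2$.

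For $1\Rightarrow2$, after replacing $(A_1,B_1)$ by $(-A_1,-B_1)$ if necessary I may assume $\omega(\mathcal{R})=\sqrt2$. The engine is the sum-of-squares identity
\[ \mathcal{R}=\tfrac{1}{2\sqrt2}\bigl(A_1^2+B_1^2+A_2^2+B_2^2\bigr)-\tfrac{1}{2\sqrt2}\Bigl[\bigl(A_1-\tfrac{1}{\sqrt2}(A_2+B_2)\bigr)^2+\bigl(B_1-\tfrac{1}{\sqrt2}(A_2-B_2)\bigr)^2\Bigr], \]
valid because $\mathfrak{N}_1$ and $\mathfrak{N}_2$ commute; since each square is $\le I$, it reproves $\mathcal{R}\le\sqrt2\,I$. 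Evaluating in $\omega$ and using $\omega(\mathcal{R})=\sqrt2$ forces \emph{both} groups to saturate: $(i)$ $\omega(A_i^2)=\omega(B_i^2)=1$, and $(ii)$ $\pi\bigl(A_1-\tfrac{1}{\sqrt2}(A_2+B_2)\bigr)\Omega=0$ and $\pi\bigl(B_1-\tfrac{1}{\sqrt2}(A_2-B_2)\bigr)\Omega=0$. From $(i)$, $(ii)$, $\pi(S_i)\Omega=\Omega$ and faithfulness I would deduce $A_iS_i=S_iA_iS_i$ and that the compressions $A_i'=S_iA_iS_i$, $B_i'=S_iB_iS_i$ are symmetries, $(A_i')^2=S_i$, so $(ii)$ reads $\pi(A_1')\Omega=\tfrac{1}{\sqrt2}\pi(A_2'+B_2')\Omega$ and $\pi(B_1')\Omega=\tfrac{1}{\sqrt2}\pi(A_2'-B_2')\Omega$. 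A short GNS computation then yields $\pi(A_1'B_1'+B_1'A_1')\Omega=0$ and $\pi(A_2'B_2'+B_2'A_2')\Omega=0$, whence by faithfulness both pairs anticommute. Setting $E_1=\tfrac12(A_1'+S_1)$, $F_1=\tfrac12(B_1'+S_1)$, and (using the side-$2$ anticommutation to make $\tilde A_2=\tfrac{1}{\sqrt2}(A_2'+B_2')$, $\tilde B_2=\tfrac{1}{\sqrt2}(A_2'-B_2')$ symmetries) $E_2=\tfrac12(\tilde A_2+S_2)$, $F_2=\tfrac12(\tilde B_2+S_2)$, I obtain projections in the required compressed algebras; the anticommutation of $A_1',B_1'$ is precisely $(2E_1-S_1)(2F_1-S_1)+(2F_1-S_1)(2E_1-S_1)=0$, and $(ii)$ gives $\pi(E_1-E_2)\Omega=\pi(F_1-F_2)\Omega=0$, i.e.\ $\omega((E_1-E_2)^2)=\omega((F_1-F_2)^2)=0$.

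The hard part will be the rigidity in $1\Rightarrow2$: extracting from the \emph{single scalar} equation $\omega(\mathcal{R})=\sqrt2$ both that the operators become symmetries on the supports and the exact EPR vector identities, and in particular justifying the passage to $S_i\mathfrak{N}_iS_i$ via $A_iS_i=S_iA_iS_i$. By contrast, $2\Rightarrow1$ is essentially mechanical once the anticommutation-transfer step is in hand.
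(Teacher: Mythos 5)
Your proposal is correct, but it reaches the conclusion by a genuinely different route than the paper. For $1\Rightarrow2$ the paper follows Summers--Werner: it forms the non-self-adjoint operators $X_1=\tfrac12(S_1A_1S_1+iS_1B_1S_1)$ and $X_2=\tfrac{1}{2\sqrt2}(S_2A_2S_2+S_2B_2S_2+i(S_2A_2S_2-S_2B_2S_2))$, runs a Cauchy--Schwarz/parallelogram chain on $\mathrm{Re}\langle\Omega,\pi(X_1^*X_2)\Omega\rangle$, and reads off the rigidity from saturation of that chain; crucially the compressions $S_iA_iS_i$ are inserted at the very first step of the chain (legitimately, since the $S_i$ commute across the two algebras and fix $\Omega$), so the identities $(S_iA_iS_i)^2=S_i$ come out directly and the issue of whether $A_i$ preserves the range of $S_i$ never arises. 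Your sum-of-squares decomposition of $\mathcal{R}$ is a cleaner, Landau/Tsirelson-style engine that delivers the same saturation data, but because you apply it to the \emph{uncompressed} operators you only get $S_1A_1^2S_1=S_1$, which by itself does \emph{not} give $(S_1A_1S_1)^2=S_1$ (the defect is exactly $\lvert(I-S_1)A_1S_1\rvert^2=S_1A_1^2S_1-(S_1A_1S_1)^2$, which can be nonzero for a contraction with $\omega(A_1^2)=1$). So the step you flag as the hard part is genuinely the crux, and condition $(i)$ alone will not close it; you must use $(ii)$: from $\pi(A_1)\Omega=\tfrac{1}{\sqrt2}\pi(A_2+B_2)\Omega$ and $[I-S_1,A_2+B_2]=0$ one gets $\pi((I-S_1)A_1)\Omega=\tfrac{1}{\sqrt2}\pi(A_2+B_2)\pi(I-S_1)\Omega=0$, hence $\omega(A_1(I-S_1)A_1)=0$ and, by the support property, $(I-S_1)A_1S_1=0$; only then does $(i)$ upgrade to $(S_1A_1S_1)^2=S_1$. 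Since you cite exactly these ingredients, the deduction goes through, but it would be simpler to apply your identity to the compressed operators from the start (using $\omega(\mathcal{R})=\omega(\mathcal{R}')$), which is in effect what the paper does. The remaining differences are cosmetic: the paper extracts the anticommutation from $\pi(X_1^2-X_1^{*2})\Omega=\pi(X_2^2-X_2^{*2})\Omega=0$ where you do a direct GNS computation, and in both directions the paper performs the $45^{\circ}$ rotation on side $1$ (where the anticommutation is given) whereas you rotate on side $2$, which costs you the (correct, one-line) transfer of the anticommutation across the algebras in $2\Rightarrow1$.
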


\begin{proof}
Let $(\pi, \mathcal{H}_{\omega}, \Omega)$ be GNS representation of $\mathfrak{N}_1 \vee \mathfrak{N}_2$ induced by $\omega$. Then $\pi(S_1)\Omega=\pi(S_2)\Omega=\Omega$.
\begin{description}
\item[$1 \Rightarrow 2$]

Since $\omega$ is strongly maximally correlated, there exists self-adjoint contractions $A_1,B_1 \in \mathfrak{N}_1$ and $A_2, B_2 \in \mathfrak{N}_2$ such that $\frac{1}{2} \vert \omega(A_1A_2+A_1B_2+B_1A_1-B_1B_2) \vert =\sqrt{2}$.
Let $X_1:=(1/2)(S_1A_1S_1+iS_1B_1S_1)$ and $X_2 := (1/2\sqrt{2})(S_2A_2S_2+S_2B_2S_2+i(S_2A_2S_2-S_2B_2S_2))$. Then 
\begin{equation}
\label{e1.01}
 X_1^*X_1+X_1X_1^*=\frac{1}{2}(S_1A_1S_1)^2+\frac{1}{2}(S_1B_1S_1)^2 \leq S_1
\end{equation}
and
\begin{equation}
\label{e1.02}
X_2^*X_2+X_2X_2^* =\frac{1}{2}(S_2A_2S_2)^2+\frac{1}{2}(S_2B_2S_2)^2 \leq S_2. 
\end{equation}
Since $\pi(S_1)\Omega=\pi(S_2)\Omega=\Omega$,
\begin{equation}
\label{e1.1}
 \begin{split}
&\sqrt{2} \times \frac{1}{2} \vert \omega(A_1A_2+A_1B_2+B_1A_2-B_1B_2) \vert \\
&=\frac{\sqrt{2}}{2} \vert \langle \Omega, \pi((S_1A_1S_1)(S_2A_2S_2)+(S_1A_1S_1)(S_2B_2S_2)  \\
& \ \ \ \ \ \ \ \ \ \ \ \ +(S_1B_1S_1)(S_2A_2S_2)-(S_1B_1S_1)(S_2B_2S_2))\Omega \rangle \vert \\
&=4 \vert \text{Re} \langle \Omega, \pi(X_1^*X_2) \Omega \rangle \vert \\
&\leq 2 \vert \text{Re} \langle \pi(X_1) \Omega, \pi(X_2) \Omega \rangle \vert + 2 \vert \text{Re} \langle \pi(X_2^*) \Omega, \pi(X_1^*) \Omega \rangle \vert \\
&=\vert \| \pi(X_1) \Omega \|^2 + \| \pi(X_2) \|^2 - \| \pi(X_1-X_2)\Omega \|^2 \vert \\
& \ \ \ \ \ \ \ \ \ \ \ + \vert \| \pi(X_2^*)\Omega \|^2 + \| \pi(X_1^*) \Omega \|^2 - \| \pi(X_2^* - X_1^*) \|^2 \vert \\
&\leq \langle \Omega, \pi(X_1^*X_1+X_1X_1^*+X_2^*X_2+X_2X_2^*) \Omega \rangle \\
&\leq 2.
\end{split} 
\end{equation}
Since $(1/2)\omega(A_1A_2+A_1B_2+B_1A_2-B_1B_2)=\sqrt{2}$, equality obtains in (\ref{e1.1}). By Equations (\ref{e1.01}) and (\ref{e1.02}),
\begin{equation}
\label{e1.2}
\langle \Omega, \pi(X_1^*X_1+X_1X_1^*) \Omega \rangle = 1,
\end{equation}
\begin{equation}
\label{e1.3}
\langle \Omega, \pi(X_2^*X_2+X_2X_2^*) \Omega \rangle = 1,
\end{equation}
\begin{equation}
\label{e1.4}
\| \pi(X_1 - X_2)\Omega \|^2 = \| \pi(X_1^* - X_2^*) \Omega \|^2 = 0.
\end{equation}
Equations (\ref{e1.01}) and (\ref{e1.2}) imply $\langle \Omega,(S_1A_1S_1)^2 \Omega \rangle=\langle \Omega, (S_1B_1S_1)^2 \Omega \rangle=\langle \Omega, S_1 \Omega \rangle$. Thus
\begin{equation}
\label{e1.5}
(S_1A_1S_1)^2=(S_1B_1S_1)^2=S_1.
\end{equation}
Similarly Equations (\ref{e1.02}) and (\ref{e1.3}) imply
\begin{equation}
\label{e1.6}
(S_2A_2S_2)^2=(S_2B_2S_2)^2=S_2.
\end{equation}
Equation (\ref{e1.4}) entails
\begin{equation}
\label{e1.11}
\pi(S_1A_1S_1)\Omega = \pi \left( \frac{1}{\sqrt{2}}(S_2A_2S_2+S_2B_2S_2) \right) \Omega,
\end{equation}
\begin{equation}
\label{e1.12}
\pi(S_1B_1S_1)\Omega = \pi \left( \frac{1}{\sqrt{2}}(S_2A_2S_2-S_2B_2S_2) \right) \Omega.
\end{equation}
By Equations (\ref{e1.4}), (\ref{e1.5}) and (\ref{e1.6})
\begin{equation}
\label{e1.13}
\begin{split}
\pi(S_1A_1S_1S_1B_1S_1+S_1B_1S_1S_1A_1S_1) \Omega 
&=(2/i) \pi(X_1^2 - X_1^{*2}) \Omega \\
&=(2/i) \pi (X_2^2 - X_2^{*2}) \Omega \\
&=0,
\end{split}
\end{equation}
which implies
\begin{equation}
\label{e1.8}
S_1A_1S_1S_1B_1S_1+S_1B_1S_1S_1A_1S_1=0.
\end{equation}
Let 
\[ E_1 := \frac{1}{2} \left( \frac{1}{\sqrt{2}}(S_1A_1S_1+S_1B_1S_1)+S_1 \right), \ \ \ E_2:=\frac{S_2A_2S_2+S_2}{2}, \]
\[ F_1:=\frac{1}{2} \left( \frac{1}{\sqrt{2}}(S_1A_1S_1-S_1B_2S_1)+S_1 \right), \ \ \ F_2:=\frac{S_2B_2S_2+S_2}{2}. \]
By Equations (\ref{e1.5}), (\ref{e1.6}) and (\ref{e1.8}), $E_1,F_1 \in S_1\mathfrak{N}_1S_1$ and $E_2,F_2 \in S_2\mathfrak{N}_2S_2$ are projections, and $(2E_1-S_1)(2F_1-S_1)+(2F_1-S_1)(2E_1-S_1)=0$. Equations (\ref{e1.11}) and (\ref{e1.12}) imply $\pi(E_1)\Omega=\pi(E_2)\Omega$ and $\pi(F_1)\Omega=\pi(F_2)\Omega$. Therefore $\omega((E_1-E_2)^2)=\omega((F_1-F_2)^2)=0$.

\item[$2 \Rightarrow 1$]
Let define $A_1 := 1/\sqrt{2}(2E_1-S_1)+1/\sqrt{2}(2F_1-S_1)$, $B_1:=1/\sqrt{2}(2E_1-S_1)-1/\sqrt{2}(2F_1-S_1)$, $A_2 := 2E_2-S_2$ and $B_2 := 2F_2 - S_2$.

Since $(2E_1-S_1)(2F_1-S_1)+(2F_1-S_1)(2E_1-S_1)=0$, $A_1^2=B_1^2=S_1$ and $A_2^2=B_2^2=S_2$. Thus $A_1$, $B_1$, $A_2$ and $B_2$ are self-adjoint contractions. $\pi(E_1)\Omega=\pi(E_2)\Omega$ and $\pi(F_1)\Omega=\pi(F_2)\Omega$ imply
\[ \frac{1}{2}\omega(A_1A_2+A_1B_2+B_1A_2-B_1B_2)=\sqrt{2}. \]
\end{description}

\end{proof}

\begin{corollary}
\label{maximal_cor}
Let $\mathfrak{N}_1$ and $\mathfrak{N}_2$ be von Neumann algebras on a Hilbert space $\mathcal{H}$ such that $\mathfrak{N}_1 \subseteq \mathfrak{N}_2'$, let $\omega$ be a normal state of $\mathfrak{N}_1 \vee \mathfrak{N}_2$. If $\omega$ is strongly maximally Bell correlated, then $\omega$ is an EPR state for incommensurable pairs.
\end{corollary}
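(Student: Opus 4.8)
The plan is to read the required projections off Theorem~\ref{maximal} and then verify the two noncommutativity conditions in the definition of an EPR state for incommensurable pairs. Since $\omega$ is strongly maximally Bell correlated, Theorem~\ref{maximal} supplies projections $E_1,F_1\in S_1\mathfrak{N}_1S_1$ and $E_2,F_2\in S_2\mathfrak{N}_2S_2$ with $\omega((E_1-E_2)^2)=\omega((F_1-F_2)^2)=0$ and with the anticommutation relation $(2E_1-S_1)(2F_1-S_1)+(2F_1-S_1)(2E_1-S_1)=0$. Because $S_i$ is the support of $\omega|_{\mathfrak{N}_i}$ it lies in $\mathfrak{N}_i$, so $S_i\mathfrak{N}_iS_i\subseteq\mathfrak{N}_i$ and hence $E_1,F_1\in\mathfrak{N}_1$ and $E_2,F_2\in\mathfrak{N}_2$; moreover the two vanishing conditions say precisely that $\omega$ is an EPR state for $(E_1,E_2)$ and for $(F_1,F_2)$. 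It therefore remains only to prove $\omega(|[E_1,F_1]|^2)\neq0$ and $\omega(|[E_2,F_2]|^2)\neq0$.

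For the first index I would set $u:=2E_1-S_1$ and $v:=2F_1-S_1$. Since $E_1,F_1$ are projections in $S_1\mathfrak{N}_1S_1$ one computes $u^2=v^2=S_1$ and $[E_1,F_1]=\tfrac14[u,v]$. The anticommutation relation gives $vu=-uv$, so $[u,v]=2uv$ and $[u,v]^2=4uvuv=-4u^2v^2=-4S_1$. As $[E_1,F_1]$ is skew-adjoint, $|[E_1,F_1]|^2=-[E_1,F_1]^2=-\tfrac{1}{16}[u,v]^2=\tfrac14 S_1$, whence $\omega(|[E_1,F_1]|^2)=\tfrac14\,\omega(S_1)=\tfrac14\neq0$, using $\omega(S_1)=1$.

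The main work, and the chief obstacle, is to obtain the same anticommutation relation for the second index, since Theorem~\ref{maximal} provides it only for the first. I would transfer it through the GNS representation $(\pi,\mathcal{H}_\omega,\Omega)$ used in the proof of Theorem~\ref{maximal}. Writing $p:=2E_2-S_2$ and $q:=2F_2-S_2$, the EPR conditions give $\pi(E_1)\Omega=\pi(E_2)\Omega$ and $\pi(F_1)\Omega=\pi(F_2)\Omega$, hence $\pi(u)\Omega=\pi(p)\Omega$ and $\pi(v)\Omega=\pi(q)\Omega$. Since $\mathfrak{N}_1\subseteq\mathfrak{N}_2'$, the images $\pi(\mathfrak{N}_1)$ and $\pi(\mathfrak{N}_2)$ commute; combining this commutation with the two vector equalities and moving a single factor across $\Omega$ at each step yields $\pi(pq)\Omega=\pi(vu)\Omega$ and $\pi(qp)\Omega=\pi(uv)\Omega$, and therefore $\pi(pq+qp)\Omega=\pi(uv+vu)\Omega=0$ by the first-index relation. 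Because $pq+qp$ is a self-adjoint element of $S_2\mathfrak{N}_2S_2$ and the restriction of $\omega$ to $S_2\mathfrak{N}_2S_2$ is faithful, the vanishing $\omega((pq+qp)^2)=\|\pi(pq+qp)\Omega\|^2=0$ forces $pq+qp=0$. With this relation in hand, the computation of the second paragraph applied to $E_2,F_2$ gives $\omega(|[E_2,F_2]|^2)=\tfrac14\,\omega(S_2)=\tfrac14\neq0$, and the two pairs $(E_1,F_1)$ and $(E_2,F_2)$ then exhibit $\omega$ as an EPR state for incommensurable pairs. The delicate point is exactly this symmetrization: recognizing that commutativity of the two algebras together with $\pi(u)\Omega=\pi(p)\Omega$ and $\pi(v)\Omega=\pi(q)\Omega$ lets the first-index anticommutator be pushed onto the second index, after which faithfulness on the reduced algebra promotes the vector identity to an operator identity.
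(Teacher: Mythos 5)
Your argument is correct, and its first half coincides with the paper's: both read off the projections from Theorem~\ref{maximal} and use $u^2=v^2=S_1$ together with $uv+vu=0$ to get $|[E_1,F_1]|^2=\tfrac14 S_1$, hence $\omega(|[E_1,F_1]|^2)=\tfrac14$. Where you diverge is in the treatment of the second pair. The paper never establishes the operator identity $pq+qp=0$; it simply notes that $\pi(E_1)\Omega=\pi(E_2)\Omega$ and $\pi(F_1)\Omega=\pi(F_2)\Omega$, combined with commutativity of the two algebras, give the vector identity $\pi([E_1,F_1])\Omega=-\pi([E_2,F_2])\Omega$, whence $\omega([E_2,F_2]^*[E_2,F_2])=\|\pi([E_2,F_2])\Omega\|^2=\|\pi([E_1,F_1])\Omega\|^2=\tfrac14$ directly. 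Your route instead transfers the anticommutator onto the second index as a vector identity and then upgrades it to an operator identity by invoking faithfulness of $\omega$ on the reduced algebra $S_2\mathfrak{N}_2S_2$ (a standard fact about support projections, which you assert but should cite or prove). Both are valid; the paper's version is shorter and needs only the expectation value $\omega(|[E_2,F_2]|^2)\neq 0$, which is all the definition requires, while yours proves the strictly stronger statement that $2E_2-S_2$ and $2F_2-S_2$ genuinely anticommute as operators --- a symmetrized form of condition 2 of Theorem~\ref{maximal} that is not needed for the corollary but is a nice by-product.
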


\begin{proof}
Let $(\pi, \mathcal{H}_{\omega}, \Omega)$ be GNS representation of $\mathfrak{N}_1 \vee \mathfrak{N}_2$ induced by $\omega$, and let $S_1$ and $S_2$ be the support of $\omega |_{\mathfrak{N}_1}$ and $\omega |_{\mathfrak{N}_2}$ respectively. 
By Theorem \ref{maximal}, there are projections $E_1,F_1 \in S_1\mathfrak{N}_1S_1$ and $E_2,F_2 \in S_2\mathfrak{N}_2S_2$ such that $\omega((E_1-E_2)^2)=\omega((F_1-F_2)^2)=0$ and $(2E_1-S_1)(2F_1-S_1)+(2F_1-S_1)(2E_1-S_1)=0$.

Let $A_1:=2E_1-S_1$ and $B_1:=2F_1-S_1$. Then $\omega([A_1,B_1]^*[A_1,B_1])=4$ and $[E_1,F_1]^*[E_1,F_1]=1/16[A_1,B_1]^*[A_1,B_1]$ since $A_1,B_1 \in S_1\mathfrak{N}_1S_1$, $A_1B_1+B_1A_1=0$ and $A_1^2=B_1^2=S_1$. It follows that $\omega([E_1,F_1]^*[E_1,F_1])=1/4$. 

$\pi(E_1)\Omega=\pi(E_2)\Omega$ and $\pi(F_1)\Omega=\pi(F_2)\Omega$ imply $\pi([E_1,F_1])\Omega=-\pi([E_2,F_2])\Omega$. Thus $\omega([E_2,F_2]^*[E_2,F_2])=\omega([E_1,F_1]^*[E_1,F_1])=1/4$. Therefore $\omega$ is an EPR state for incommensurable pairs.

\end{proof}

The converse of Corollary \ref{maximal_cor} does not hold. Let consider a vector state induced by $\Psi_1$.
\[ \Psi_1:= \frac{1}{3} \begin{pmatrix} 1 \\ 0 \\ 0 \end{pmatrix} \otimes \begin{pmatrix} 1 \\ 0 \\ 0 \end{pmatrix} + \frac{1}{3}\begin{pmatrix} 0 \\ 1 \\ 0 \end{pmatrix} \otimes \begin{pmatrix} 0 \\ 1 \\ 0 \end{pmatrix} + \frac{\sqrt{7}}{3} \begin{pmatrix} 0 \\ 0 \\ 1 \end{pmatrix} \otimes \begin{pmatrix} 0 \\ 0 \\ 1 \end{pmatrix} \]

Then this state is an EPR state for 
\[ \left( \begin{pmatrix} 1 & 0 & 0 \\ 0 & 0 & 0 \\ 0 & 0 & 0 \end{pmatrix} \otimes \begin{pmatrix} 1 & 0 & 0 \\ 0 & 1 & 0 \\ 0 & 0 & 1 \end{pmatrix}, \begin{pmatrix} 1 & 0 & 0 \\ 0 & 1 & 0 \\ 0 & 0 & 1 \end{pmatrix} \otimes \begin{pmatrix} 1 & 0 & 0 \\ 0 & 0 & 0 \\ 0 & 0 & 0 \end{pmatrix} \right) \]
and
\[ \left( \frac{1}{2} \begin{pmatrix} 1 & 1 & 0 \\ 1 & 1 & 0 \\ 0 & 0 & 0 \end{pmatrix} \otimes  \begin{pmatrix} 1 & 0 & 0 \\ 0 & 1 & 0 \\ 0 & 0 & 1 \end{pmatrix}, \frac{1}{2}\begin{pmatrix} 1 & 0 & 0 \\ 0 & 1 & 0 \\ 0 & 0 & 1 \end{pmatrix} \otimes \begin{pmatrix} 1 & 1 & 0 \\ 1 & 1 & 0 \\ 0 & 0 & 0 \end{pmatrix} \right), \]
so it is an EPR state for incommensurable pairs.

Let define
\[ \begin{split}
\mathbb{A}_1 := \{ A \otimes I \in \mathbb{B}(\mathcal{H}_3) \otimes I | &\text{there exists} \ I \otimes B \in I \otimes \mathbb{B}(\mathcal{H}_3) \ \text{such that} \\
& \ (A \otimes I)\Psi_1=(I \otimes B)\Psi_1 \},
\end{split} \]
where $\mathbb{B}(\mathcal{H}_3)$ is the set of all operators on the 3-dimensional Hilbert space.
Then any projection in $\mathbb{A}_1$ can be expressed as
\[ \begin{pmatrix} a_{11} & a_{12} & 0 \\ a_{21} & a_{22} & 0 \\ 0 & 0 & a_{33} \end{pmatrix} \otimes I, \]
where $a_{33}=0,1$. Thus for any projections $E,F \in \mathbb{A}_1$, $(2E-I)(2F-I)+(2F-I)(2E-I) \neq 0$. Therefore the vector state induced by $\Psi_1$ is not strongly maximally Bell correlated state by Theorem \ref{maximal} although it is an EPR state for incommensurable pairs.

In the following theorem, we examine whether an EPR state for incommensurable pairs is Bell correlated or not.

\begin{theorem}
\label{violates}
Let $\mathfrak{N}_1$ and $\mathfrak{N}_2$ be von Neumann algebras on a Hilbert space $\mathcal{H}$ such that $\mathfrak{N}_1 \subseteq \mathfrak{N}_2'$ and let $\omega$ be a normal state of $\mathfrak{N}_1 \vee \mathfrak{N}_2$. If $\omega$ is an EPR state for incommensurable pairs, then $\omega$ is Bell correlated.
\end{theorem}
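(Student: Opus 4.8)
\emph{The plan.} I would pass to the GNS representation $(\pi,\mathcal H_\omega,\Omega)$ of $\mathfrak N_1\vee\mathfrak N_2$ induced by $\omega$ and exploit that the EPR conditions identify the relevant vectors. Fix the projections $E_1,F_1\in\mathfrak N_1$ and $E_2,F_2\in\mathfrak N_2$ supplied by the definition. Since $E_1-E_2$ and $F_1-F_2$ are self-adjoint, $\omega((E_1-E_2)^2)=\|\pi(E_1-E_2)\Omega\|^2=0$ and likewise for $F$, so that $\pi(E_1)\Omega=\pi(E_2)\Omega$ and $\pi(F_1)\Omega=\pi(F_2)\Omega$. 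Setting $a:=2E_1-I$, $b:=2F_1-I\in\mathfrak N_1$ and $c:=2E_2-I$, $d:=2F_2-I\in\mathfrak N_2$, these are self-adjoint contractions with $a^2=b^2=c^2=d^2=I$, and the EPR identities read $\pi(a)\Omega=\pi(c)\Omega$ and $\pi(b)\Omega=\pi(d)\Omega$. A first observation, which dictates the whole strategy, is that the naive choice $A_1=a,\ B_1=b,\ A_2=c,\ B_2=d$ only yields $\tfrac12\omega(\mathcal B)=2\,\mathrm{Re}\langle\pi(a)\Omega,\pi(b)\Omega\rangle/2\le 1$; the linear data $a,b,c,d$ alone saturate the classical bound, so one must feed the noncommutativity in through a nonlinear (polar) choice on side $2$.

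I would keep $A_1=a$, $B_1=b$ and take on side $2$ the self-adjoint contractions
\[
A_2:=\mathrm{sgn}(c+d),\qquad B_2:=\mathrm{sgn}(c-d)\in\mathfrak N_2,
\]
defined by Borel functional calculus (partial isometries, hence of norm $\le 1$). Writing $\mathcal B=A_1A_2+A_1B_2+B_1A_2-B_1B_2=(A_1+B_1)A_2+(A_1-B_1)B_2$ and using that $\mathfrak N_1$ and $\mathfrak N_2$ commute, one gets $\omega((A_1+B_1)A_2)=\langle\pi(a+b)\Omega,\pi(A_2)\Omega\rangle$. Here the EPR identities are essential: $\pi(a+b)\Omega=\pi(c+d)\Omega$, so this equals $\omega((c+d)\,\mathrm{sgn}(c+d))=\omega(|c+d|)$, and similarly the second term becomes $\omega(|c-d|)$. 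Thus
\[
\omega(\mathcal B)=\omega\bigl(|c+d|+|c-d|\bigr).
\]

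Next I would use that $c,d$ are self-adjoint unitaries, so $P:=c+d$ and $Q:=c-d$ anticommute and satisfy $P^2+Q^2=4I$ with $0\le P^2\le 4I$. Hence $Q^2=4I-P^2$ commutes with $P^2$, and $|c+d|+|c-d|=\sqrt{P^2}+\sqrt{4I-P^2}=g(P^2)$ where $g(s)=\sqrt s+\sqrt{4-s}$. Since $g(s)^2=4+2\sqrt{s(4-s)}\ge 4$ on $[0,4]$, we have $g(P^2)\ge 2I$, whence $\omega(\mathcal B)\ge 2$ with no hypothesis beyond EPR.

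The hard part is the strict inequality, and this is where the incommensurability enters. Let $T:=g(P^2)-2I\ge 0$, so $\omega(\mathcal B)=2+\omega(T)$. A direct computation gives $[c,d]=-(c+d)(c-d)$, hence $[c,d]^*[c,d]=(c-d)(c+d)^2(c-d)=P^2(4I-P^2)=h(P^2)$ with $h(s)=s(4-s)$. The point is that both $T$ and $[c,d]^*[c,d]$ are functions of the single self-adjoint operator $P^2$, and both vanish exactly on the spectral subspace where $P^2\in\{0,4\}$; therefore they have the \emph{same} kernel projection, namely $E_{P^2}(\{0,4\})$. If $\omega(T)=0$, normality of $\omega$ forces the support $S_2$ of $\omega|_{\mathfrak N_2}$ below this kernel projection, and then $\omega([c,d]^*[c,d])=\omega(S_2\,[c,d]^*[c,d]\,S_2)=0$; since $[c,d]=4[E_2,F_2]$ this gives $\omega(|[E_2,F_2]|^2)=0$, contradicting the definition of an EPR state for incommensurable pairs. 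Hence $\omega(T)>0$ and
\[
\tfrac12\,\bigl|\omega(A_1A_2+A_1B_2+B_1A_2-B_1B_2)\bigr|=1+\tfrac12\,\omega(T)>1,
\]
so $\omega$ is Bell correlated. I expect the two delicate points to be the EPR-enabled reduction $\omega(\mathcal B)=\omega(|c+d|+|c-d|)$ (one must be disciplined about which algebra each vector is rewritten in) and the final strictness step identifying the common kernel of $T$ and $[c,d]^*[c,d]$; the lower bound $g\ge 2$ is routine. Note only the side-$2$ noncommutativity $\omega(|[E_2,F_2]|^2)\ne 0$ is used, the side-$1$ condition being available by symmetry.
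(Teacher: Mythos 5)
Your proof is correct, but it follows a genuinely different route from the paper's. The paper keeps everything algebraic and explicit: it splits $F_1$ into its diagonal and off-diagonal parts relative to $E_1$, takes $A_1=E_1-E_1^{\perp}$, $B_1=E_1F_1E_1^{\perp}+E_1^{\perp}F_1E_1$, and uses a normalized linear combination of the corresponding side-$2$ operators, which makes the Bell expectation evaluate exactly to $\sqrt{1+c^2}$ with $c=\|\pi(E_1^{\perp}F_1E_1)\Omega\|^2+\|\pi(E_1F_1E_1^{\perp})\Omega\|^2>0$ supplied directly by $\omega(|[E_1,F_1]|^2)\neq 0$. You instead keep the natural symmetries $2E_1-I$, $2F_1-I$ on side $1$ and optimize side $2$ by the polar choice $\mathrm{sgn}(c\pm d)$, reducing the claim to the operator inequality $|c+d|+|c-d|\geq 2I$ and a spectral characterization of when equality is attained; strictness then comes from $\omega(|[E_2,F_2]|^2)\neq 0$ via the observation that $g(P^2)-2I$ and $[c,d]^*[c,d]=P^2(4I-P^2)$ have the same kernel projection. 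I checked the delicate steps: the EPR transfer $\omega((a\pm b)\,\mathrm{sgn}(c\pm d))=\omega(|c\pm d|)$ is legitimate (commutativity of the two algebras plus $\pi(a)\Omega=\pi(c)\Omega$, $\pi(b)\Omega=\pi(d)\Omega$), the anticommutation $PQ+QP=0$ and $P^2+Q^2=4I$ are right, and the support-projection argument for strictness is sound for a normal state. What each approach buys: the paper's computation yields a concrete lower bound $\sqrt{1+c^2}$ tied to the side-$1$ noncommutativity and needs no functional calculus beyond polynomials; yours uses only the side-$2$ noncommutativity, isolates the clean structural fact that EPR alone already saturates the classical bound $2$ while noncommutativity forces strict excess, but requires Borel functional calculus and a spectral/support argument that the paper avoids. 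Both are complete proofs of the theorem.
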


\begin{proof}
Let $\omega$ is an EPR state for incommensurable pair. Then there exist projections $E_1, F_1 \in \mathfrak{N}_1$ and $E_2, F_2 \in \mathfrak{N}_2$ such that $\omega$ is an EPR state for incommensurable pairs $(E_1, E_2)$ and $(F_1, F_2)$.

Let $(\pi,\mathcal{H}_{\omega}, \Omega)$ be a GNS representation of $\mathfrak{N}_1 \vee \mathfrak{N}_2$ induced by $\omega$ and let 
\[ c:=\langle \pi(E_1^{\perp}F_1E_1) \Omega, \pi(E_1^{\perp}F_1E_1) \Omega \rangle + \langle \pi(E_1F_1E_1^{\perp}) \Omega, \pi(E_1F_1E_1^{\perp}) \Omega \rangle. \]
Since $\omega([E_1,F_1]^*[E_1,F_1]) \neq 0$, $\pi([E_1,F_1]) \Omega \neq 0$. It implies that $\pi(E_1^{\perp}F_1E_1) \Omega \neq 0$ or $\pi(E_1F_1E_1^{\perp}) \Omega \neq 0$. Thus $c \neq 0$.

Let 
\[ A_1:=E_1-E_1^{\perp}, \]
\[ B_1:=E_1F_1E_1^{\perp}+E_1^{\perp}F_1E_1, \]
\[ A_2:=\frac{1}{\sqrt{1+c^2}}(E_2-E_2^{\perp})+\frac{c}{\sqrt{1+c^2}}(E_2F_2E_2^{\perp}+E_2^{\perp}F_2E_2), \]
\[ B_2:=\frac{1}{\sqrt{1+c^2}}(E_2-E_2^{\perp})-\frac{c}{\sqrt{1+c^2}}(E_2F_2E_2^{\perp}+E_2^{\perp}F_2E_2). \]
Then $A_1$, $B_1$, $A_2$ and $B_2$ are self-adjoint contractions because $A_1^2,B_1^2,A_2^2,B_2^2 \leq I$, where $I$ is an identity operator on $\mathcal{H}$.

By $\pi(E_1)\Omega =\pi(E_2)\Omega$ and $\pi(F_1)\Omega=\pi(F_2)\Omega$,
\[ \begin{split}
\omega((E_1-E_1^{\perp})(E_2-E_2^{\perp})) &
=\langle \Omega, \pi((E_1-E_1^{\perp})(E_2-E_2^{\perp})) \Omega \rangle \\
&=\langle \Omega, \pi((E_1-E_1^{\perp})(E_1-E_1^{\perp})) \Omega \rangle \\
&=1,
\end{split} \]

\[ \begin{split}
&\omega((E_1F_1E_1^{\perp}+E_1^{\perp}F_1E_1)(E_2F_2E_2^{\perp}+E_2^{\perp}F_2E_2)) \\
&=\langle \Omega, \pi((E_1F_1E_1^{\perp}+E_1^{\perp}F_1E_1)(E_2F_2E_2^{\perp}+E_2^{\perp}F_2E_2)) \Omega \rangle \\
&=\langle \Omega, \pi((E_1F_1E_1^{\perp}+E_1^{\perp}F_1E_1)(E_1^{\perp}F_1E_1+E_1F_1E_1^{\perp})) \Omega \rangle \\
&=\langle \Omega, \pi((E_1F_1E_1^{\perp}F_1E_1+E_1^{\perp}F_1E_1F_1E_1^{\perp})) \Omega \rangle \\
&=c. \end{split} \]

Therefore
\[ \frac{1}{2}\omega(A_1A_2+A_1B_2+B_1A_2-B_1B_2)=\sqrt{1+c^2}>1. \]

\end{proof}

The converse of Theorem \ref{violates} does not hold.
Let consider a vector state induced by $\Psi_2$.
\[ \Psi_2 := \frac{1}{2} \begin{pmatrix} 1 \\ 0 \end{pmatrix} \otimes \begin{pmatrix} 1 \\ 0 \end{pmatrix} + \frac{\sqrt{3}}{2} \begin{pmatrix} 0 \\ 1 \end{pmatrix} \otimes \begin{pmatrix} 0 \\ 1 \end{pmatrix} \]
A vector state induced by $\Psi_2$ is a Bell correlated state \cite{Gisin1991}.

Let define
\[ \begin{split}
\mathbb{A}_2 := \{ A \otimes I \in \mathbb{B}(\mathcal{H}_2) \otimes I | &\text{there exists} \ I \otimes B \in I \otimes \mathbb{B}(\mathcal{H}_2) \ \text{such that} \\
& \ (A \otimes I)\Psi_2=(I \otimes B)\Psi_2 \},
\end{split} \]
where $\mathbb{B}(\mathcal{H}_2)$ is the set of all operators on the 2-dimensional Hilbert space. Then any projection in $\mathbb{A}_2$ can be expressed as
\[ \begin{pmatrix} a_{11} &  0 \\ 0 & a_{22}  \end{pmatrix} \otimes I. \]
Thus for any operators $E \otimes I,F \otimes I \in \mathbb{A}_2$, $[E \otimes I, F \otimes I ]\Psi_2=0$. Therefore the vector state induced by $\Psi_2$ is not an EPR state for incommensurable pairs although it is a Bell correlated state.

\section{Existence of EPR states for incommensurable pairs in algebraic quantum field theory}
In this section, we shall consider algebraic quantum field theory.
In algebraic quantum field theory, each bounded open region $\mathcal{O}$ in the Minkowski space is associated with a von Neumann algebra $\mathfrak{N}(\mathcal{O})$. Such a von Neumann algebra is called a local algebra.
We say that bounded open regions $\mathcal{O}_1$ and $\mathcal{O}_2$ are
strictly space-like separated
if there is a neighborhood $\mathcal{V}$ of the origin of the Minkowski space
such that $\mathcal{O}_1 + \mathcal{V}$ and $\mathcal{O}_2$ are space-like separated.

In the present paper, we make the following assumptions.
For any bounded open region $\mathcal{O}$ in the Minkowski space, $\mathfrak{N}(\mathcal{O})$ is properly infinite \cite[Corollary 1.11.6]{Baumgartel1995}.
If $\mathcal{O}_1$ and $\mathcal{O}_2$ are space-like separated,
then $[X_1,X_2]=0$
for any $X_1 \in \mathfrak{N}(\mathcal{O}_1)$ and $X_2 \in \mathfrak{N}(\mathcal{O}_2)$. If $\mathcal{O}_1$ and $\mathcal{O}_2$ are strictly space-like separated, then $X_1X_2 \neq 0$ for any nonzero operators $X_1 \in \mathfrak{N}(\mathcal{O}_1)$ and $X_2 \in \mathfrak{N}(\mathcal{O}_2)$ \cite[Theorem 1.12.3]{Baumgartel1995}.

The following theorem shows that there is a dense set of EPR states for incommensurable pairs between two strictly space-like separated regions.


\begin{theorem}
\label{exists}
Let $\mathfrak{N}_1$ and $\mathfrak{N}_2$ be properly infinite
von Neumann algebras on a Hilbert space $\mathcal{H}$ such that $\mathfrak{N}_1 \subseteq \mathfrak{N}_2'$ and $X_1X_2 \neq 0$ for any nonzero operators $X_1 \in \mathfrak{N}_1$ and $X_2 \in \mathfrak{N}_2$.

For any unit vector $\Psi$ and any real number $\epsilon > 0$, there exists a unit vector $\Psi'$ such that $\| \Psi - \Psi' \| < \epsilon$ and a vector state induced by $\Psi'$ is an EPR state for incommensurable pairs.
\end{theorem}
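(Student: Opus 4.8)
The plan is to reduce the statement to the construction of a single maximally entangled vector supported on a two-dimensional matrix block, and then to exploit proper infiniteness to make that block carry arbitrarily little of the weight of $\Psi$, so that only a small perturbation is needed.

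First I would reformulate the target. For a vector state $\omega'=\langle\Psi',\,\cdot\,\Psi'\rangle$ the EPR conditions $\omega'((E_i-E_i')^2)=0$ are equivalent to $E_i\Psi'=E_i'\Psi'$, and the incommensurability conditions to $[E_1,F_1]\Psi'\neq0$ and $[E_2,F_2]\Psi'\neq0$. Hence it suffices to produce a system of $2\times2$ matrix units $\{e_{ij}\}\subseteq\mathfrak{N}_1$ and $\{f_{ij}\}\subseteq\mathfrak{N}_2$ together with a unit vector $\Psi'$ near $\Psi$ satisfying the maximal entanglement relations $e_{ij}\Psi'=f_{ji}\Psi'$. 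Setting $E_1=e_{11}$, $F_1=\frac{1}{2}\sum_{ij}e_{ij}$, $E_2=f_{11}$, $F_2=\frac{1}{2}\sum_{ij}f_{ij}$, these relations force both EPR conditions at once, and a short computation in the block shows that $[E_1,F_1]\Psi'$ is a nonzero scalar multiple of $(e_{11}-e_{22})\Psi'$, hence nonzero provided $\Psi'$ has nontrivial components in both $e_{11}\mathcal{H}$ and $e_{22}\mathcal{H}$ (and symmetrically for $[E_2,F_2]\Psi'$).

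Next I would build the entangled seed using proper infiniteness. Because $\mathfrak{N}_1$ is properly infinite it contains an isometry $s$ with $ss^*\neq I$, and then the projections $p_k:=s^k(I-ss^*)s^{*k}$ are mutually orthogonal, mutually equivalent, and satisfy $\sum_k p_k\leq I$; likewise $\mathfrak{N}_2$ contains such a sequence $q_l$. Since $\sum_k\langle\Psi,p_k\Psi\rangle\leq\|\Psi\|^2=1$ and similarly on the other side, I can choose indices $k,l$ so large that $\langle\Psi,(p_k+p_{k+1})\Psi\rangle$ and $\langle\Psi,(q_l+q_{l+1})\Psi\rangle$ are as small as desired. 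I set $e_{11}=p_k$, $e_{22}=p_{k+1}$ with connecting partial isometry $e_{21}=sp_k$, and $f_{11}=q_l$, $f_{22}=q_{l+1}$ with $f_{21}=tq_l$ for the corresponding isometry $t\in\mathfrak{N}_2$. The hypothesis that $X_1X_2\neq0$ for nonzero $X_1\in\mathfrak{N}_1$, $X_2\in\mathfrak{N}_2$ guarantees $e_{11}f_{11}=p_kq_l\neq0$, so there is a unit vector $\xi\in e_{11}f_{11}\mathcal{H}$; the vector $\Phi=\frac{1}{\sqrt{2}}(\xi+e_{21}f_{21}\xi)$ is then a unit vector obeying $e_{ij}\Phi=f_{ji}\Phi$.

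Finally I would assemble $\Psi'$ and estimate. With $P_1=e_{11}+e_{22}$, $P_2=f_{11}+f_{22}$, put $\Psi_{\mathrm g}=(I-P_1)(I-P_2)\Psi$; since $P_1\vee P_2\leq P_1+P_2$ the difference $\|\Psi-\Psi_{\mathrm g}\|$ is controlled by the small quantities above, so $\Psi_{\mathrm g}/\|\Psi_{\mathrm g}\|$ is close to $\Psi$. Every $e_{ij}$ and $f_{ij}$ annihilates $(I-P_1)(I-P_2)\mathcal{H}$, so the unit vector $\Psi'=\cos\theta\,(\Psi_{\mathrm g}/\|\Psi_{\mathrm g}\|)+\sin\theta\,\Phi$ (the two summands being orthogonal) satisfies $e_{ij}\Psi'=\sin\theta\,e_{ij}\Phi=\sin\theta\,f_{ji}\Phi=f_{ji}\Psi'$, giving all the required relations, while $\|\Psi'-\Psi\|$ is made less than $\epsilon$ by taking the block weights and $\theta$ small. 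The main obstacle, and indeed the whole point of the argument, is the tension between the \emph{exactness} of the EPR correlations, which pins $\Psi'$ to an eigenrelation of the chosen projections, and the \emph{approximation} requirement; I would resolve it precisely by using proper infiniteness to conceal the correlated block in a region of vanishingly small $\Psi$-weight, with the strict space-like separation (Schlieder) property doing the essential work of keeping the seed vector $\xi$ nonzero.
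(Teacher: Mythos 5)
Your proposal is correct and is essentially the paper's own argument: the paper likewise uses proper infiniteness to extract orthogonal equivalent projections $E_{1,n},E_{1,n+1}$ (your $p_k,p_{k+1}$) with connecting partial isometries, uses the Schlieder-type hypothesis to seed a maximally entangled unit vector in the product block, takes $F_{1,n}=\frac{1}{2}(E_{1,n}+E_{1,n+1}+V_{1,n}+V_{1,n}^*)$ (your $\frac{1}{2}\sum_{ij}e_{ij}$), and mixes the entangled component into $\Psi$ with vanishing weight. One small inaccuracy: $[E_1,F_1]=\frac{1}{2}(e_{12}-e_{21})$, so $[E_1,F_1]\Psi'$ equals $\tfrac{\sin\theta}{2\sqrt{2}}(f_{21}\xi-e_{21}\xi)$ rather than a multiple of $(e_{11}-e_{22})\Psi'$, but this vector is visibly nonzero (its two summands are orthogonal unit vectors), which is exactly the computation the paper performs, so the conclusion stands.
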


\begin{proof}
Since $\mathfrak{N}_1$ and $\mathfrak{N}_2$ are properly infinite, there are a countably infinite families $\{ E_{1,i} \in \mathfrak{N}_1 | i \in \mathbb{N} \}$ and $\{ E_{2,i} \in \mathfrak{N}_2 | i \in \mathbb{N} \}$ of mutually orthogonal projections and families $\{ V_{1,i} \in \mathfrak{N}_1 | i \in \mathbb{N} \}$ and $\{ V_{2,i} \in \mathfrak{N}_2 | i \in \mathbb{N} \}$ of partial isometries such that $\sum_{i=0}^{\infty} E_{1,i} = \sum_{i=0}^{\infty} E_{2,i} = I$, $E_{1,i}=V_{1,i}^{*}V_{1,i}$, $E_{1,i+1}=V_{1,i}V_{1,i}^{*}$, $E_{2,i}=V_{2,i}^{*}V_{2,i}$ and $E_{2,i+1}=V_{2,i}V_{2,i}^{*}$ for any $i \in \mathbb{N}$, where $I$ is an identity operator on $\mathcal{H}$ \cite[Proposition 2.2.4]{Sakai1971}. 

By an assumption, $E_{1,n}E_{2,n} \neq 0$ for any $n \in \mathbb{N}$. Thus there is a unit vector $\Phi_n$ such that $E_{1,n}E_{2,n}\Phi_n=\Phi_n$. Let $\Psi$ be a unit vector in a Hilbert space $\mathcal{H}$ and
\[ \Psi_n := \left( 1 - \frac{1}{n} \right)^{1/2} \frac{(I - E_{1,n} -E_{1,n+1})(I-E_{2,n}-E_{2,n+1})}{\| (I - E_{1,n} -E_{1,n+1})(I-E_{2,n}-E_{2,n+1}) \|} \Psi + \left( \frac{1}{2n} \right)^{1/2} (\Phi_n+V_{1,n}V_{2,n}\Phi_n). \]
Then $\Psi_n$ is a unit vector and $\lim_{n \rightarrow \infty} \Psi_n = \Psi$ since $\sum_{i=0}^{n-1} E_{1,i} \leq I-E_{1,n}-E_{1,n+1}$, $\sum_{i=0}^{n-1} E_{2,i} \leq I -E_{2,n}-E_{2,n+1}$, and $\sum_{i=0}^{\infty} E_{1,i} = \sum_{i=0}^{\infty} E_{2,i} =I$. Let
\[ F_{1,n}:=\frac{1}{2}(E_{1,n}+E_{1,n+1}+V_{1,n}+V_{1,n}^*), \]
\[ F_{2,n}:=\frac{1}{2}(E_{2,n}+E_{2,n+1}+V_{2,n}+V_{2,n}^*). \]

Then $F_{1,n}$ and $F_{2,n}$ are projections in $\mathfrak{N}_1$ and $\mathfrak{N}_2$ respectively, and 
\[ E_{1,n}F_{1,n}\Psi_n=\frac{1}{2}\left( \frac{1}{2n} \right)^{1/2}(\Phi_n+V_{2,n}\Phi_n),\]
\[ F_{1,n}E_{1,n}\Psi_n=\frac{1}{2}\left( \frac{1}{2n} \right)^{1/2}(\Phi_n + V_{1,n}\Phi_n). \] 
Since $\| V_{1,n}\Phi_n - V_{2,n}\Phi_n \|^2=\langle V_{1,n} \Phi_n, V_{1,n} \Phi_n \rangle + \langle V_{2,n} \Phi_n, V_{2,n} \Phi_n \rangle - \langle V_{1,n} \Phi_n, V_{2,n} \Phi_n \rangle - \langle V_{2,n} \Phi_n, V_{1,n} \Phi_n \rangle = 2$, $V_{1,n}\Phi_n \neq V_{2,n}\Phi_n$. Thus $[E_{1,n},F_{1,n}]\Psi_n \neq 0$. Similarly $[E_{2,n},F_{2,n}]\Psi_n \neq 0$. Because
\[ E_{1,n}\Psi_n=E_{2,n}\Psi_n=\left( \frac{1}{2n} \right)^{1/2}  \Phi_n, \]
\[ F_{1,n}\Psi_n=F_{2,n}\Psi_n=\frac{1}{2}\left( \frac{1}{2n} \right)^{1/2} (\Phi_n+V_{1,n}\Phi_n+V_{2,n}\Phi_n+V_{1,n}V_{2,n}\Phi_n), \]
the vector state of $\mathfrak{N}_1 \vee \mathfrak{N}_2$ induced by $\Psi_n$ is an EPR state for incommensurable pairs $(E_{1,n},E_{2,n})$ and $(F_{1,n},F_{2,n})$.

\end{proof}


Theorem \ref{violates} and Theorem \ref{exists} entail the following corollary. This is a theorem of Halvorson and Clifton \cite[Proposition 1]{HalvorsonClifton2000}. 

\begin{corollary}[Halvorson-Clifton]
\label{HC}
Let $\mathfrak{N}_1$ and $\mathfrak{N}_2$ be properly infinite von Neumann algebras on a Hilbert space $\mathcal{H}$ such that $\mathfrak{N}_1 \subseteq \mathfrak{N}_2'$ and $X_1X_2 \neq 0$ for any nonzero operators $X_1 \in \mathfrak{N}_1$ and $X_2 \in \mathfrak{N}_2$.

For any unit vector $\Psi$ and any real number $\epsilon > 0$, there exists a unit vector $\Psi'$ such that $\| \Psi - \Psi' \| < \epsilon$ and a vector state induced by $\Psi'$ is Bell correlated.
\end{corollary}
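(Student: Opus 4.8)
The plan is to read the corollary as the direct composition of Theorem \ref{exists} and Theorem \ref{violates}, observing that the hypotheses imposed on $\mathfrak{N}_1$ and $\mathfrak{N}_2$ here coincide exactly with those of Theorem \ref{exists}. So first, fixing a unit vector $\Psi$ and a real number $\epsilon > 0$, I would invoke Theorem \ref{exists}: since $\mathfrak{N}_1$ and $\mathfrak{N}_2$ are properly infinite, satisfy $\mathfrak{N}_1 \subseteq \mathfrak{N}_2'$, and obey $X_1 X_2 \neq 0$ for all nonzero $X_1 \in \mathfrak{N}_1$ and $X_2 \in \mathfrak{N}_2$, the theorem applies without modification and furnishes a unit vector $\Psi'$ with $\| \Psi - \Psi' \| < \epsilon$ such that the vector state $\omega'$ of $\mathfrak{N}_1 \vee \mathfrak{N}_2$ induced by $\Psi'$ is an EPR state for incommensurable pairs.

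Second, I would apply Theorem \ref{violates} to $\omega'$. The hypotheses of that theorem are $\mathfrak{N}_1 \subseteq \mathfrak{N}_2'$, which is assumed, together with $\omega'$ being a normal state of $\mathfrak{N}_1 \vee \mathfrak{N}_2$; the latter holds because every vector state is normal. Theorem \ref{violates} then yields that $\omega'$ is Bell correlated, which is precisely the desired conclusion, and the proof closes.

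Since both ingredient theorems are already in hand, I expect no genuine obstacle here: the entire content of the argument is the remark that the approximant delivered by Theorem \ref{exists} is exactly of the type to which Theorem \ref{violates} applies. The only step worth recording explicitly is the elementary fact that a vector state is automatically normal, which is what legitimizes the passage from \emph{EPR for incommensurable pairs} to \emph{Bell correlated}.
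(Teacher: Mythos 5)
Your proposal is correct and matches the paper's own argument exactly: the paper derives Corollary \ref{HC} precisely by combining Theorem \ref{exists} (density of EPR states for incommensurable pairs) with Theorem \ref{violates} (such states are Bell correlated). The only point worth making explicit, which you do, is that the vector state produced is normal, so Theorem \ref{violates} applies.
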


\section{Concluding remarks}
In this paper, we have defined EPR states for incommensurable pairs which allow for similar arguments to EPR's, and clarified the logical relations between EPR states for incommensurable pairs and Bell correlated states. Let $\omega$ be a normal state of $\mathfrak{N}_1 \vee \mathfrak{N}_2$. Then the following relations hold (Corollary \ref{maximal_cor} and Theorem \ref{violates}).
\begin{center}
$\omega$ is strongly maximally Bell correlated.

$\Downarrow$ $\not \Uparrow$

$\omega$ is an EPR state for incommensurable pairs.

$\Downarrow$ $\not \Uparrow$

$\omega$ is Bell correlated.
\end{center}

In Theorem \ref{exists} it is shown that there is a norm dense set of EPR states for incommensurable pairs between two strictly space-like separated regions. This theorem entails that there is a norm dense set of Bell correlated states between two strictly space-like separated regions (Corollary \ref{HC}).

\section*{Acknowledgements}
The author is supported by the JSPS KAKENHI, No.23701009 and the John Templeton Foundation Grant ID 35771.


\end{document}